\newtheorem{theorem}{Theorem}[section]
\newtheorem{lemma}[theorem]{Lemma}
\numberwithin{equation}{section}
\newenvironment{proof}{\noindent\textbf{Proof\ }}{\hspace*{\fill}$\Box$\medskip}
\begin{document}

\title{An improvement on the Louvain algorithm using random walks}
\author{DO Duy Hieu and PHAN Thi Ha Duong
\\
\\
 Institute of Mathematics\\ Vietnam Academy of Science and Technology}
\date{\empty}
\maketitle
 Email: ddhieu@math.ac.vn (Do Duy Hieu),
phanhaduong@math.ac.vn (Phan Thi Ha Duong).
\begin{abstract}
We will present improvements to famous algorithms for community detection, namely Newman’s spectral method algorithm and the Louvain algorithm. The Newman algorithm begins by treating the original graph as a single cluster, then repeats the process to split each cluster into two, based on the signs of the eigenvector corresponding to the second-largest eigenvalue. Our improvement involves replacing the time-consuming computation of eigenvalues with a random walk during the splitting process.

The Louvain algorithm iteratively performs the following steps until no increase in modularity can be achieved anymore: each step consists of two phases, phase 1 for partitioning the graph into clusters, and phase 2 for constructing a new graph where each vertex represents one cluster obtained from phase 1. We propose an improvement to this algorithm by adding our random walk algorithm as an additional phase for refining clusters obtained from phase 1. It maintains a complexity comparable to the Louvain algorithm while exhibiting superior efficiency. To validate the robustness and effectiveness of our proposed algorithms, we conducted experiments using randomly generated graphs and real-world data.
\end{abstract}

\section{Introduction}
Research on community detection in networks is an essential field within network science, with a wide array of applications in computer science and various other scientific disciplines \cite{L18, L29, L38}. Consequently, numerous research efforts from scientists have employed various methodological approaches. Among these, two algorithms garnering significant attention are Newman's spectral method \cite{main} and the Louvain algorithm \cite{Louvain}. Therefore, numerous extensions and improvements have been made to these algorithms \cite{minh, FastLouvain, leiden, lei15}.

Random walk is a focal point of interest in network community research, as it helps elucidate the characteristics of vertices belonging to the same or different communities \cite{latapy, overlap}. This paper will employ random walks to improve Newman's spectral method and the Louvain algorithm.

\subsection{Newman’s Spectral Method}

In \cite{main}, Newman proposed a spectral method for detecting network communities. The initial algorithm constructs a normalized Laplacian matrix as follows:
\[ \mathbf{L} = D^{-1/2} A D^{-1/2} \]
Here, the matrix \( A \) is the adjacency matrix, and \( D \) is the diagonal matrix with elements equal to the vertex degrees \(D_{ii} = d_i\).

Assuming \(V = \{1, 2, \ldots, n\}\) represents the set of graph vertices. The algorithm then classifies the graph into two communities based on the eigenvectors corresponding to eigenvalues greater than 1 of the matrix \( \mathbf{L} \). Typically, the second eigenvector is chosen (assuming \(v_\beta = (v^1_\beta, v^2_\beta, \ldots, v^n_\beta)\)), and specifically, if \(v^i_\beta \geq 0\), vertex \(i\) belongs to community \(C_1\); otherwise, vertex \(i\) belongs to community \(C_2\).

Subsequently, the algorithm generates two subgraphs \(G_1\) and \(G_2\) corresponding to communities \(C_1\) and \(C_2\), and then repeats the process. The algorithm stops when the partitioning of the graph into communities \(C_1\) and \(C_2\) no longer increases the modularity value (modularity will be introduced in the next section).

In this paper, we also introduce a graph partition method similar to the one mentioned earlier. However, we employ a random walk at each step instead of using eigenvectors, thereby reducing our algorithm's computational complexity. Additionally, we establish a connection with the eigenvalues and eigenvectors of the normalized Laplacian matrix. This connection reveals that our algorithm produces clustering results equivalent to Newman's algorithm when the number of random walk steps is sufficiently large.

\subsection{Louvain Algorithm}

The Louvain algorithm \cite{Louvain} stands out for its simplicity and elegance. It optimizes a quality function, such as Modularity or CPM, through two primary phases: \\
\textbf{Phase 1:} Nodes assess potential relocation to neighboring communities by maximizing Modularity increase using the formula:
\begin{equation}\label{Qic}
    \Delta Q_{i,C_j} = \left[\frac{\Sigma_{in}+2 k_{i,in}}{2m} - \left(\frac{\Sigma_{tot}+k_i}{2m}\right)^2\right] - \left[\frac{\Sigma_{in}}{2m} - \left(\frac{\Sigma_{tot}}{2m}\right)^2 - \left(\frac{k_i}{2m}\right)^2\right]
\end{equation}  
Here, $\Delta Q_{i, C_j}$ signifies Modularity change upon placing vertex $i$ into community $C_j$, $\Sigma_{in}$ is the sum of weights of links within the transitioning community, $\Sigma_{tot}$ is the sum of weights of links to nodes in the transitioning community, $k_i$ is the weighted degree of $i$, $k_{i, in}$ is the sum of weights of links between $i$ and other nodes in the transitioning community, and $m$ is the sum of weights of all links in the network.\\
\textbf{Phase 2:} Consolidates nodes within the same community to form a new network. Self-loops denote intra-community links, and weighted edges represent inter-community connections. 

The algorithm iterates these phases until Modularity ceases to increase.

\subsection{Random walk on graphs}
Before delving into the next section, let's revisit the concept of a random walk on a graph and some associated knowledge. Consider an undirected and connected graph $G = (V, E)$ with sets of vertices ($V$) and edges ($E$). Let $|V|$ be denoted as $n$, $|E|$ as $m$, and the adjacency matrix of $G$ as $A$. In this context, $A_{ij} = 1$ if vertices $i$ and $j$ are connected (linked by an edge), and $A_{ij} = 0$ otherwise. The degree $d(i) = \sum_{j} A_{ij}$ of a vertex $i$ represents the Number of its neighbors, including itself. For simplicity, this paper focuses on unweighted graphs. Nevertheless, it is straightforward to extend the results to weighted graphs, where $A_{ij} \in \mathbb{R}^+$ instead of $A_{ij} \in {0,1}$, showcasing the versatility of this approach.

Now, consider a random walk $X = X_0, X_1, \ldots, X_t, \ldots$ on graph $G$ (refer to \cite{30, 4} for a comprehensive presentation). At each step $t$, the walker moves to a vertex randomly and uniformly selected from its neighbors. Consequently, the sequence of visited vertices forms a Markov chain, with states representing the graph's vertices. At each step, the transition probability from vertex $i$ to vertex $j$ is given by $P_{ij} = \frac{A_{ij}}{d(i)}$. This definition establishes the transition matrix $P$ for the random walk process.

It's evident that $P = D^{-1} A$, where $D$ is the diagonal matrix of degrees ($D_{ii} = d(i)$ and $D_{ij} = 0$ for $i \neq j$). Finally, the information about vertex $i$ encoded in $P^t$ resides in the $n$ probabilities $(P^t_{ik}){1\leq k\leq n}$, equivalent to the $i$-th row of matrix $P^t$ denoted by $P^t_{i\bullet}$.  We assume that $G$ is
connected.
 According to the convergence theorem for finite Markov chains, the associated transition matrix $P$ satisfies
$\lim_{k \rightarrow \infty}P =P_{\infty}$, where $(P_{\infty})_{ij} = \phi_j$, the $j$-th component of the unique stationary distribution $\phi=(\phi_1,\phi_2,...,\phi_n)$, note that $\phi_i=d(i)/\sum d(j)$ (see in \cite{latapy}).
This paper also represents $X_{i\bullet}$ as the $i$-th row of an arbitrary matrix $X$.

\subsection{Modularity}
 The modularity $Q$ introduced in \cite{La32,La33}, which relies on the fraction of edges $e_C$ inside community $C$ and the fraction of edges $a_C$ bound to community $C$ :
\begin{equation}\label{QQ}
Q(\mathcal{P})=\sum_{C \in \mathcal{P}} (e_C-a_C^2)
\end{equation}
From this, we can consider the Modularity corresponding to each clustering for the graph as the total Modularity of each community. From there, with the $C$ community, we have the Modularity corresponding to it:
\begin{equation}
Q(C,G)= e_C-a_C^2
\end{equation}

\subsection{Our contribution}

In this paper, we first employ a random walk strategy to introduce a novel method for community detection through graph partitioning, akin to the spectral method in \cite{main}, known as the Random Walk Graph Partition Algorithm. This algorithm exhibits lower computational complexity and greater efficiency than the spectral algorithm in \cite{main}.

Furthermore, considering the Random Walk Graph Partition Algorithm as the 1.5th step of the Louvain algorithm, we propose a new algorithm named the Random Walk Graph Partition Louvain Algorithm. This algorithm maintains computational complexity equivalent to the Louvain algorithm but achieves higher efficiency, especially for graphs with unclear community structures.

Additionally, we conduct experiments on both randomly generated and real data to demonstrate the reasonability and effectiveness of our proposed algorithms.

\section{Random Walk Graph Partition Algorithm}
Newman's spectral approach is based on spectral analysis of the normalized Laplacian matrix $\mathbf{L}$. This approach is quite good because the Laplacian matrix is a matrix that has many properties of graphs and is the subject of many people's research. Furthermore, this algorithm is also quite simple. However, we need to find eigenvalues and eigenvectors with great computational complexity to implement the algorithm. Therefore, instead of directly analyzing the matrix Laplacian, we will study it through a random walk. We note that the transition matrix of the random walk $P$ is similar to the matrix $\mathbf{L}$ ($\mathbf{L} = D^{-\frac{1}{2}} A D^{-\frac{1}{2}} = D^{\frac{1}{2}} P D^{-\frac{1}{2}}$), so they have the same eigenvalues, and the eigenvectors are related.

Now, we will analyze the relationship between the spectral approach and random walk. First, we recalled the following lemma.
\begin{lemma}(\cite[Lemma 1]{latapy})\label{lm1}
The eigenvalues of the matrix $P$ are real and satisfy:
\begin{equation}
1 = \lambda_1 > \lambda_2 \geq \ldots \geq \lambda_n > -1.
\end{equation}
Moreover, there exists an orthonormal family of vectors $(s_\alpha)_{1\leq \alpha \leq n}$ such that each vector
$v_\alpha = D^{-1/2}s_\alpha$ and $u_\alpha = D^{1/2}s_\alpha$ are respectively a right and a left eigenvector associated to
the eigenvalue $\lambda_\alpha$:
\begin{equation*}
    \forall \alpha,\, Pv_\alpha = \lambda_\alpha v_\alpha \,\, \mbox{ and }  P^Tu_\alpha = \lambda_\alpha u_\alpha
\end{equation*}
$$ \forall \alpha, \forall \beta,\,\, v^T_\alpha u_\beta = \delta_{\alpha \beta}$$
\end{lemma}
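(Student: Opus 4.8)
The plan is to lean entirely on the symmetrization already recorded in the text, namely $\mathbf{L} = D^{-1/2} A D^{-1/2} = D^{1/2} P D^{-1/2}$, and to push every claim through this similarity. First I would observe that because $G$ is undirected we have $A = A^T$, so $\mathbf{L} = D^{-1/2} A D^{-1/2}$ is a \emph{real symmetric} matrix. The spectral theorem then supplies, for free, real eigenvalues $\lambda_1 \geq \lambda_2 \geq \cdots \geq \lambda_n$ together with an orthonormal eigenbasis $(s_\alpha)_{1\leq\alpha\leq n}$ satisfying $\mathbf{L}\, s_\alpha = \lambda_\alpha s_\alpha$ and $s_\alpha^T s_\beta = \delta_{\alpha\beta}$. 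This already yields the reality of the spectrum and produces the orthonormal family $(s_\alpha)$ that the statement asks for.

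Next I would transport these eigenvectors to $P$ and to $P^T$. Since $P = D^{-1/2}\mathbf{L}D^{1/2}$ is similar to $\mathbf{L}$, the two matrices share the same eigenvalues; and setting $v_\alpha = D^{-1/2}s_\alpha$, a one-line check $P v_\alpha = D^{-1/2}\mathbf{L} s_\alpha = \lambda_\alpha v_\alpha$ shows $v_\alpha$ is a right eigenvector. Transposing the similarity gives $P^T = D^{1/2}\mathbf{L}D^{-1/2}$, so that $u_\alpha = D^{1/2}s_\alpha$ obeys $P^T u_\alpha = \lambda_\alpha u_\alpha$, a left eigenvector. The biorthogonality then collapses to $v_\alpha^T u_\beta = s_\alpha^T (D^{-1/2}D^{1/2}) s_\beta = s_\alpha^T s_\beta = \delta_{\alpha\beta}$. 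These steps are routine once the symmetric matrix $\mathbf{L}$ is in hand.

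The substantive content is the strict chain $1 = \lambda_1 > \lambda_2 \geq \cdots \geq \lambda_n > -1$, and this is where I expect the real work to lie. For the top, I would use that $P = D^{-1}A$ is row-stochastic, so $P\mathbf{1} = \mathbf{1}$ makes $1$ an eigenvalue, while nonnegativity of $P$ bounds its spectral radius by $1$; hence $\lambda_1 = 1$. Connectedness of $G$ makes $P$ irreducible, so the Perron--Frobenius theorem forces the eigenvalue of maximal modulus to be simple, giving the strict gap $\lambda_1 > \lambda_2$. For the bottom I would verify that $I + \mathbf{L} \succeq 0$: writing $y = D^{-1/2}x$ and using $\sum_j A_{ij} = d_i$, one finds $x^T(I+\mathbf{L})x = \tfrac12\sum_{i,j} A_{ij}(y_i + y_j)^2 \geq 0$, which yields $\lambda_\alpha \geq -1$ for all $\alpha$.

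The main obstacle is the \emph{strict} lower bound $\lambda_n > -1$, since this is the one place where linear algebra alone does not suffice. Equality $\lambda_n = -1$ would require a nonzero $y$ with $y_i = -y_j$ across every edge, i.e.\ a proper $2$-colouring, which exists only when $G$ is bipartite (equivalently, the walk is periodic). I would resolve this by invoking the paper's degree convention, under which each vertex counts itself ($A_{ii}>0$): the self-loop term $A_{ii}(y_i+y_i)^2 = 4A_{ii}y_i^2$ forces $y_i = 0$, destroying any bipartition and securing strictness. I would flag in the write-up that the whole argument relies on this aperiodicity assumption, as it is the only non-formal ingredient in the proof.
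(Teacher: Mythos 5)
Your proof is correct, and it is essentially the standard argument behind this lemma: the paper itself gives no proof (it imports the statement as Lemma~1 of the cited Pons--Latapy reference), and that source proceeds exactly as you do, by symmetrizing $P$ to $D^{-1/2}AD^{-1/2}$, applying the spectral theorem, using stochasticity plus irreducibility for $1=\lambda_1>\lambda_2$, and the quadratic form $\frac12\sum_{i,j}A_{ij}(y_i+y_j)^2$ together with the self-loop/aperiodicity convention for $\lambda_n>-1$. You were right to flag the strict lower bound as the one place needing the convention $d(i)=\sum_j A_{ij}$ ``including itself'': without self-loops (or some other aperiodicity hypothesis) a connected bipartite graph does attain $\lambda_n=-1$, so that assumption is genuinely load-bearing.
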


 This paper only considers the case $\lambda_2 \neq \lambda_3$. We have the following theorem from Lemma \ref{lm1}.
\begin{theorem}\label{thm1}
Let $i$ be any vertex of the graph $G$. Then we have the $j-$th component of vector $  P^t_{i \bullet} - \phi$ and the $j-$th component of vector $s_2$ have the same sign for all $j=1,2,...,n$ or opposite signs for all $j=1,2,...,n$, where $s_ 2$ is the eigenvector corresponding to the second eigenvalue of the normalized Laplacian matrix $\mathbf{L}$.
\end{theorem}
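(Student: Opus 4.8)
The plan is to diagonalize $P^t$ in the biorthogonal system supplied by Lemma~\ref{lm1} and to read off the sign of each coordinate of $P^t_{i\bullet}-\phi$ from its dominant non-stationary term. Since $v_\alpha^T u_\beta=\delta_{\alpha\beta}$ and the $s_\alpha$ form an orthonormal basis, the families $(v_\alpha)$ and $(u_\alpha)$ are dual bases of $\mathbb{R}^n$, and I would first record the spectral expansion
\begin{equation*}
P^t=\sum_{\alpha=1}^{n}\lambda_\alpha^{t}\,v_\alpha u_\alpha^{T},
\qquad\text{equivalently}\qquad
(P^t)_{ij}=\sum_{\alpha=1}^{n}\lambda_\alpha^{t}\,(v_\alpha)_i (u_\alpha)_j .
\end{equation*}
This follows directly from $Pv_\alpha=\lambda_\alpha v_\alpha$ together with $v_\alpha^T u_\beta=\delta_{\alpha\beta}$, or by passing through the symmetric matrix $\mathbf{L}=D^{1/2}PD^{-1/2}$, for which $\mathbf{L}^t=\sum_\alpha\lambda_\alpha^{t}s_\alpha s_\alpha^{T}$ and hence $P^t=D^{-1/2}\mathbf{L}^t D^{1/2}$.

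Next I would isolate the stationary term. For $\alpha=1$ we have $\lambda_1=1$, and since $v_1\propto\mathbf{1}$ forces $s_1\propto D^{1/2}\mathbf{1}$, normalization gives $v_1=\tfrac{1}{\sqrt{2m}}\mathbf{1}$ and $(u_1)_j=\tfrac{d(j)}{\sqrt{2m}}$, so the $\alpha=1$ contribution to $(P^t)_{ij}$ is exactly $d(j)/2m=\phi_j$ (with $2m=\sum_k d(k)$). Subtracting $\phi$ leaves the exact identity
\begin{equation*}
\bigl(P^t_{i\bullet}-\phi\bigr)_j=\sum_{\alpha=2}^{n}\lambda_\alpha^{t}\,(v_\alpha)_i (u_\alpha)_j,
\end{equation*}
and, substituting $(v_\alpha)_i=(s_\alpha)_i/\sqrt{d(i)}$ and $(u_\alpha)_j=\sqrt{d(j)}\,(s_\alpha)_j$, the $\alpha=2$ term becomes $\lambda_2^{t}\,\tfrac{\sqrt{d(j)}}{\sqrt{d(i)}}(s_2)_i(s_2)_j$.

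The heart of the argument is that this $\alpha=2$ term dictates the sign once $t$ is large. Factoring it out,
\begin{equation*}
\bigl(P^t_{i\bullet}-\phi\bigr)_j=\lambda_2^{t}\,\frac{\sqrt{d(j)}}{\sqrt{d(i)}}(s_2)_i(s_2)_j\left[1+\sum_{\alpha=3}^{n}\Bigl(\frac{\lambda_\alpha}{\lambda_2}\Bigr)^{t}\frac{(s_\alpha)_i(s_\alpha)_j}{(s_2)_i(s_2)_j}\right],
\end{equation*}
and because $\lambda_2$ is strictly larger in absolute value than every $\lambda_\alpha$ with $\alpha\ge3$ — this is where $\lambda_2\neq\lambda_3$ (hence $\lambda_2>\lambda_3$) is used, together with $\lambda_2>|\lambda_n|$ — each ratio $(\lambda_\alpha/\lambda_2)^{t}\to0$, so the bracket tends to $1$ and is in particular positive for all large $t$ (when $(s_2)_i,(s_2)_j\neq0$). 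Since $\lambda_2^{t}>0$ and $\sqrt{d(j)}/\sqrt{d(i)}>0$, the sign of $(P^t_{i\bullet}-\phi)_j$ is then exactly $\operatorname{sign}\bigl((s_2)_i\bigr)\cdot\operatorname{sign}\bigl((s_2)_j\bigr)$. As $\operatorname{sign}\bigl((s_2)_i\bigr)$ is a fixed $\pm1$ independent of $j$, this gives the stated dichotomy: the coordinates of $P^t_{i\bullet}-\phi$ agree in sign with those of $s_2$ for every $j$ when $(s_2)_i>0$, and are opposite for every $j$ when $(s_2)_i<0$.

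I expect the main obstacle to be precisely this domination step. The clean conclusion needs $|\lambda_2|>|\lambda_\alpha|$ for all $\alpha\ge3$; while $\lambda_2>\lambda_3$ is assumed, one must also exclude a large negative eigenvalue $\lambda_n$ with $|\lambda_n|\ge\lambda_2$, which would let an $s_n$-term compete (and, since $\lambda_n^{t}$ alternates in sign, destroy any stable sign pattern). I would therefore either work in the community-structure regime, where $\lambda_2$ is the relevant near-$1$ eigenvalue, or carry $\lambda_2>|\lambda_n|$ as a standing hypothesis. A secondary point is the degenerate set where $(s_2)_i=0$ or $(s_2)_j=0$: there the leading term vanishes and the sign assertion should be read with the corresponding quantities equal to zero, a boundary case I would note but not dwell on. Finally, I would make explicit in the statement that the conclusion holds for $t$ sufficiently large, consistent with the asymptotic ``equivalence to Newman's method'' claimed in the introduction.
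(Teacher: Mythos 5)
Your proof follows essentially the same route as the paper's: the spectral decomposition $P^t=\sum_\alpha \lambda_\alpha^t v_\alpha u_\alpha^T$ from Lemma~\ref{lm1}, subtraction of the stationary $\alpha=1$ term, factoring out $\lambda_2^t$, and letting the $\alpha\geq 3$ terms vanish so that the sign of each coordinate is $\operatorname{sign}((s_2)_i)\cdot\operatorname{sign}((s_2)_j)$. If anything, your write-up is more careful than the paper's on three points it silently elides: you correctly note that the dominance step needs $\lambda_2>|\lambda_n|$ and not merely $\lambda_2\neq\lambda_3$ (a large-modulus negative eigenvalue would make the bracket oscillate, so the paper's assertion that $\lambda_2\neq\lambda_3$ alone yields $|\lambda_\alpha/\lambda_2|<1$ for all $\alpha\geq 3$ is a genuine gap); you flag the degenerate case $(s_2)_i=0$ and the fact that the conclusion is only asymptotic in $t$; and you use the correct relation $u_\alpha=D^{1/2}s_\alpha$ where the paper's derivation of (\ref{pttk1}) miswrites it as $u_\alpha=D^{-1/2}s_\alpha$ (harmless for the sign argument, since either diagonal factor is positive, but still an error).
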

\begin{proof}
Lemma \ref{lm1} makes it possible to write a spectral decomposition of the matrix $P:$
\begin{equation}
    P= \sum_{\alpha=1}^n \lambda_\alpha v_\alpha u^T_\alpha \, \mbox{ and } \, P^t = \sum_{\alpha=1}^n \lambda^t_\alpha v_\alpha u^T_\alpha.
\end{equation}
 It follows that 
\begin{equation}\label{pt2.6}
   P^t_{ij} = \sum_{\alpha=1}^n \lambda^t_\alpha v_\alpha(i) u^T_\alpha(j) \, \mbox{ and }  P^t_{i \bullet} = \sum_{\alpha=1}^n \lambda^t_\alpha v_\alpha(i) u_\alpha.
\end{equation}
When $t$ tends towards infinity, all the terms $\alpha \geq 2$ vanish. It is easy to show that the
first right eigenvector $v_1$ is constant. By normalizing we have $\forall i, \, v_1(i)= \frac{1}{\sum_k d_k} $ and $\forall j, \, u_1(j)= \frac{d_j}{\sum_k d_k}.$  
Therefore, we have 
\begin{equation}\label{pt2.7}
\lim_{t\rightarrow \infty} P^t_{ij} = \lim_{t\rightarrow \infty}  \lambda^t_\alpha v_\alpha(i) u^T_\alpha(j) =v_1(i) u^T_1(j) = \frac{d_j}{\sum_k d_k}=\phi_j.\end{equation}
From (\ref{pt2.6}), (\ref{pt2.7}) and $\lambda_1=1$, we have
\begin{equation}
    P^t_{i \bullet} - \phi = 
    \sum_{\alpha=2}^n \lambda^t_\alpha v_\alpha(i) u_\alpha =\lambda^t_2 v_2(i) u_2+\lambda^t_3 v_3(i) u_3+\ldots+\lambda^t_n v_n(i) u_n,
\end{equation}
this is equivalent to
\begin{equation}\label{pttk}
    P^t_{i \bullet} - \phi  =\lambda^t_2 \left(v_2(i) u_2+  \frac{\lambda^t_3}{\lambda^t_2} v_3(i) u_3+\ldots+\frac{\lambda^t_n}{\lambda^t_2} v_n(i) u_n\right).
\end{equation}
On the other hand, from Theorem \ref{thm1}, we have $u_\alpha=D^{-1/2}s_\alpha$. From there, it follows.
\begin{equation}\label{pttk1}
    P^t_{i \bullet} - \phi  =\lambda^t_2 D^{-1/2} \left(v_2(i) s_2+  \frac{\lambda^t_3}{\lambda^t_2} v_3(i) s_3+\ldots+\frac{\lambda^t_n}{\lambda^t_2} v_n(i) s_n\right).
\end{equation}
From Theorem \ref{thm1} and $\lambda_2 \neq \lambda_3$,  we have $|\lambda_\alpha/\lambda_2| < 1$ with $3\leq \alpha \leq n$. 
Therefore,
when $t$ tends towards infinity, all the terms in \ref{pttk} with $\alpha \geq 3$ vanish. From there we have the $j-$th component of vector $ P^t_{i \bullet} - \phi$ and the $j-$th component of vector $\lambda^t_2 v_2(i) s_2$ having the same sign for $t$ large enough and for all $j=1,2,...,n$. Hence the conclusion of the theorem.
Furthermore, from (\ref{pttk1}), we deduce this conclusion holds for all $i=1,2,...,n$.
\end{proof}

From Theorem \ref{thm1}, we observe that clustering based on random walk and spectral analysis is the same. Therefore, based on a random walk, we can propose a Random Walk Graph Partition Algorithm \ref{alg:RWGP1} as follows.

\begin{algorithm}[H]
    \SetAlgoLined
    \KwIn{Graph $G$, $C \subset V(G)$, $Q = \emptyset$, $t$}
    \KwOut{Final list of clusters $Q$}

    \textbf{Phase 1:}

    $C_1 = \emptyset$, $C_2 = \emptyset$\;
    
    Create induced graph $G'$ from $C$, and select any vertex $i_0$ in cluster $V(G')$\;
    
    Calculate $ P^t_{i_0\bullet} - \phi = (P^t_{i_01} - \phi_1, P^t_{i_02} - \phi_2, ..., P^t_{i_0n} - \phi_n)$ in $G'$\;
    
    \For{each vertex $j$}{
        \If{$P^t_{i_0j} - \phi_j \geq 0$}{
            Add $j$ to cluster $C_1$\;
        }
        \Else{
            Add $j$ to cluster $C_2$\;
        }
    }

    \textbf{Phase 2:}

    \If{$C_1$, $C_2$ are non-empty \textbf{and} $Q(C_1, G) + Q(C_2, G) > Q(C, G)$}{
        Apply \textbf{Phase 1} with $C = C_1$ and apply \textbf{Phase 1} with $C = C_2$\;
    }
    \Else{
        $Q = Q \cup \{C\}$\;
    }

    \caption{Random Walk Graph Partition Algorithm 1}
    \label{alg:RWGP1}
\end{algorithm}

In Algorithm \ref{alg:RWGP1}, in phase 1, we need to compute $P^t_{i_0\bullet}$. The computational complexity for calculating $P^t_{i_0\bullet}$ is $O(t m)$, where $m$ is the number of edges in the graph $G$ (see \cite{latapy}). In phase 2, we need to compute $Q(C, G)$. To calculate $Q(C, G)$, we need to count the number of edges in $C$ ($e_C$) and the number of edges connected to $C$ ($a_C$), so the computational complexity of phase 2 does not exceed $O(m)$. Therefore, the computational complexity of each iteration is $O(t m) + O(m) = O(t m)$. Assuming the number of communities in the graph $G$ is $k$, the computational complexity of Algorithm \ref{alg:RWGP1} is $O(tkm)$.

Although the case $\lambda_2=\lambda_3$ rarely occurs; if it does happen or $\lambda_2$ is very close to $\lambda_3$, our Random Walk Graph Partition Algorithm 1 will no longer be accurate. Therefore, to improve the effectiveness of our algorithm in these cases, after dividing cluster $C$ into two clusters $C_1, C_2$, we will add an adjustment step. Specifically, we will review all vertices to see if they deserve to be in the current cluster or if they should be moved to another cluster based on maximizing modularity. From there, we propose the following algorithm. In this part, we shall use the notation $C_i$ to represent the community that includes node $i$ and $C_{\overline{i}}$ for the community that does not include node $i$.

\begin{algorithm}[H]
    \SetAlgoLined
    \KwIn{Graph $G$, $C \subset V(G)$, $Q = \emptyset$, $t$}
    \KwOut{Final list of clusters $Q$}

    \textbf{Phase 1:}

    $C_1 = \emptyset$, $C_2 = \emptyset$\;
    
    Create induced graph $G'$ from $C$, and select any vertex $i_0$ in cluster $V(G')$\;
    
    Calculate $ P^t_{i_0\bullet} - \phi = (P^t_{i_01} - \phi_1, P^t_{i_02} - \phi_2, ..., P^t_{i_0n} - \phi_n)$ in $G'$\;
    
    \For{each vertex $j$}{
        \If{$P^t_{i_0j} - \phi_j \geq 0$}{
            Add $j$ to cluster $C_1$\;
        }
        \Else{
            Add $j$ to cluster $C_2$\;
        }
    }
    Count.number.move=1
    
    \While{Count.number.move $\neq$ 0}{
        Count.number.move=0\;
        \For{$i$ in $V(G')$}{
            \If{$\Delta Q_{i,C_i\setminus \{i\}} < \Delta Q_{i,C_{\overline{i}}} $}{
                Add $i$ to $C_{\overline{i}}$\;
                
                Count.number.move = Count.number.move + 1\;
            }
        }
    }

    \textbf{Phase 2:}

    \If{$C_1$, $C_2$ are non-empty \textbf{and} $Q(C_1, G) + Q(C_2, G) > Q(C, G)$}{
        Apply \textbf{Phase 1} with $C = C_1$ and apply \textbf{Phase 1} with $C = C_2$\;
    }
    \Else{
        $Q = Q\cup \{C\}$\;
    }

    \caption{Random Walk Graph Partition Algorithm 2}
    \label{alg:RWGP2}
\end{algorithm}

The algorithm referred to as Algorithm \ref{alg:RWGP2} distinguishes itself from Algorithm \ref{alg:RWGP1} solely in the adjustment step, which involves computing $\Delta Q_{i,C}$. Regarding the while loop, it is worth mentioning that this loop only iterates a few times, and the computational complexity of calculating $\Delta Q_{i,C}$ does not surpass $O(m)$. Consequently, the computational complexity of Algorithm \ref{alg:RWGP2} is equivalent to that of Algorithm \ref{alg:RWGP1}.

\section{Random Walk Graph Partition Louvain algorithm}


The Louvain algorithm \cite{Louvain} is a very famous algorithm not only because of its fast calculation speed but also because of its high algorithm accuracy. It iteratively performs the following steps until no increase in modularity can be achieved anymore. It's crucial that each step consists of two phases: phase 1 for partitioning the graph into clusters, and phase 2 for constructing a new graph where each vertex represents one cluster obtained from phase 1.

However, this algorithm could be less effective when the network has an unclear community structure. In the paper \cite{leiden}, Leiden proposed one of the most prominent improvements of the Leiden algorithm - the author added a phase of fine-tuning local communities after the first phase of the Louvain algorithm.

The Leiden algorithm fine-tuning local communities doesn't work effectively in cases where the community could be more opaque. Therefore, this paper will propose an algorithm with a more effective method of fine-tuning local communities. More precisely, we will use the Random Walk Graph Partition Algorithm mentioned above to perform the refinement for each cluster obtained from Phase 1. Consequently, we propose a new algorithm named the \textbf{Random Walk Graph Partition Louvain Algorithm} (or \textbf{RWGP-Louvain Algorithm} for short).

\begin{algorithm}[H]
\SetAlgoLined
\KwIn{Network $G_{ori}$, $\mathcal{P}=\emptyset$, $t$}
\KwOut{FinalCommunities - Final list of communities}

$G=G_{ori}$

\textbf{Phase 1:}

\For{ $i$ in $V(G)$}{
    $C_i=\{i\}$,\, $\mathcal{P}=\mathcal{P}\cup C_i$\;
}

\While{some nodes are moved}{
    \For{ $i$ in $V(G)$}{
        \For{neighboring community $C_j$ of $i$}{
            Calculate $\Delta Q_{i,C_j}$ according to Formula \ref{Qic}\;
            
            Add $i$ to the community $C_{i_0}$ with maximizing $\Delta Q_{i,C_{j_0}}$\;
        }
    }
}
\For{$C_j$ in $\mathcal{P}$}{
\If{$C_j=\emptyset$}{
    $\mathcal{P}=\mathcal{P}\setminus \{C_{j}\}$\;
}
}

\textbf{Phase 2:}  

If the clusters in $\mathcal{P}$ contain supernodes, then return the clusters containing the nodes of the original graph $G_{ori}$.

\For{$C_j$ in $\mathcal{P}$}{
Apply Random Walk Graph Partition Algorithm 1 or Random Walk Graph Partition Algorithm 2 with $G=G_{ori},\, C=C_j$, $Q=\emptyset $;

$\mathcal{P}=\mathcal{P}\setminus \{C_j\} \cup Q$
}

\textbf{Phase 3:}

Create a new network $G'$ by consolidating nodes within the same community to \textbf{supernodes}\;

\For{each pair of nodes $u$ and $v$ in the same community}{
    Add a self-loop to the community node for $u$ and $v$\;
 
}

\For{each edge between nodes in different communities}{
    Add a weighted edge between the corresponding community nodes\;
}

We will repeat \textbf{phases} 1 to 3 until modularity no longer increases.\;

\caption{Random Walk Graph Partition Louvain Algorithm}
\label{alg:RWGP-louvain}
\end{algorithm}

In Algorithm \ref{alg:RWGP-louvain}, compared to the Louvain algorithm, each iteration of the algorithm involves applying the Random Walk Graph Partition Algorithm to $G=G_{ori}$ and $C=C_j$ for each $C_j \in \mathcal{P}$. For each $C_j$, the computational complexity is $O(t m_{C_j})$, where $m_{C_j}$ is the number of edges in cluster $C_j$. Consequently, the overall computational complexity for this part is $O(tm)$, where $m$ represents the number of edges in the original graph $G_{ori}$.

\section{Experiments}
Evaluating a community detection algorithm is difficult because one needs
some test graphs with already known community structure. A classical approach is
to use randomly generated graphs with given communities. Here, we will use this approach and
generate the graphs as follows. \\
\textbf{Planted l-partition model}:
The first generator model is the planted l-partition model
\cite{2-out}. By determining the number of groups $l$, the number
of vertices in each group $g$, and two probabilities of inter-cluster $p_{in}$
and intra-cluster $p_{out}$, we obtain one random graph with some
Property:
\begin{itemize}
\item The average degree of one vertex is $E\left[k\right]=p_{in}(g-1)+p_{out}g(l-1)$.
\item All communities have the same size.
\item All vertices have approximately the same degree because of each community.
It can be seen as one random graph proposed by Erdos and Renyi. Each pair of vertices is connected in those random graphs with equal probability
$p_{in}$ independent of other pairs.
\end{itemize}
\textbf{Gaussian random partition generator}:
The next generator graph model is Gaussian random partition generator \cite{2-out},
which overcomes the part disadvantage of the planted $ l-$ partition model
above the vertex degree distribution. Unlike the planted
$ l-$ partition model, the community size in the Gaussian random partition
generator is a random variable of Gaussian distribution. The parameters
What needs to be determined for the Gaussian random partition generator are:
\begin{itemize}
\item Number of vertices in the graph: $N$.
\item Mean of community's size: $m$ and variance of community's size: $\sigma$.
\item Edge probability of inter $p_{in}$ and intra-cluster $p_{out}$.
\end{itemize}

After clustering a graph, we need to evaluate its quality. Therefore, we will present a metric to evaluate the clustering quality next. 

\subsection{Evaluating metrics}
In our experiments, we will use two metrics to compare the algorithms. The first metric is to use modularity (formula \ref{QQ}). The second metric is that we use Normalized Mutual Information (NMI).
Normalized Mutual Information \cite{NMI} quantifies the similarity between true class labels $Y$ and predicted cluster assignments $C$. It is computed as:
\begin{equation}\label{NMI}
\mathrm{NMI}(Y, C)=\frac{2 \times \mathrm{MI}(Y, C)}{\mathrm{H}(Y)+\mathrm{H}(C)},
\end{equation}
where:
\begin{itemize}
    \item $\mathrm{MI}(Y, C)$ is the Mutual Information between $Y$ and $C$,
\item $\mathrm{H}(Y)$ and $\mathrm{H}(C)$ are the entropies of $Y$ and $C$ respectively.
\end{itemize}
The NMI values range from 0 to 1, with higher values indicating better clustering alignment with true class labels. It is a normalized measure commonly used in cluster evaluation.

\subsection{Experiments for Random Walk Graphs Partition Algorithm}\label{RWGP}
In this section, we conduct experiments to compare the effectiveness of our two algorithms, Random Walk Graphs Partition Algorithm 1 and Random Walk Graphs Partition Algorithm 2, with the algorithms Louvain \cite{Louvain} and Newman's Spectral Method \cite{main}. The experiments involve randomly generated graphs using the Gaussian random generator and Planted-l partition models. Given that this is not the primary focus of our paper, we will perform a limited set of experiments.

We conduct ten trials on randomly generated graphs for each experiment using either the Gaussian random generator or the Planted-l partition model. Subsequently, we calculate Modularity for the clustering results obtained from different algorithms.

Finally, we present graphical representations of the Modularity values corresponding to Random Walk Graphs Partition Algorithm 1 (RWGP1), Random Walk Graphs Partition Algorithm 2 (RWGP2), Louvain \cite{Louvain}, and Newman's Spectral Method \cite{main} (Newman).

\subsubsection{Experiments for Random Walk Graphs Partition Algorithm
on the random graph generated by the Gaussian random generator
model}
With the Gaussian random generator model, we will experiment on graphs with the number of vertices ranging from a few hundred to tens of thousands. For each type of graph, we will fix $p_{in}=0.7$, and we will explore $p_{out}$ values of $0.01$, and $0.03$, corresponding to graphs with clear community structure to graphs with unclear community structure. And we will set the variance of the community’s size $\delta =2.5$. 

\subsubsection*{Experiment 1: Experiment on the random graph generated by the Gaussian random generator model.}
Using the Gaussian random generator model with a number of vertices $N$, the mean of the community’s size $m$ is taken with a uniform distribution in the following corresponding intervals: $N \in [500;1000]$, $m \in [50;100]$. In implementing the Random Walk Graphs Partition Algorithm 1 and Random Walk Graphs Partition Algorithm 2, we set $t$ to $15$.
We present these results in  Figure \ref{Fig1}. 

\begin{figure}
\begin{centering}
\includegraphics[width=1\columnwidth]{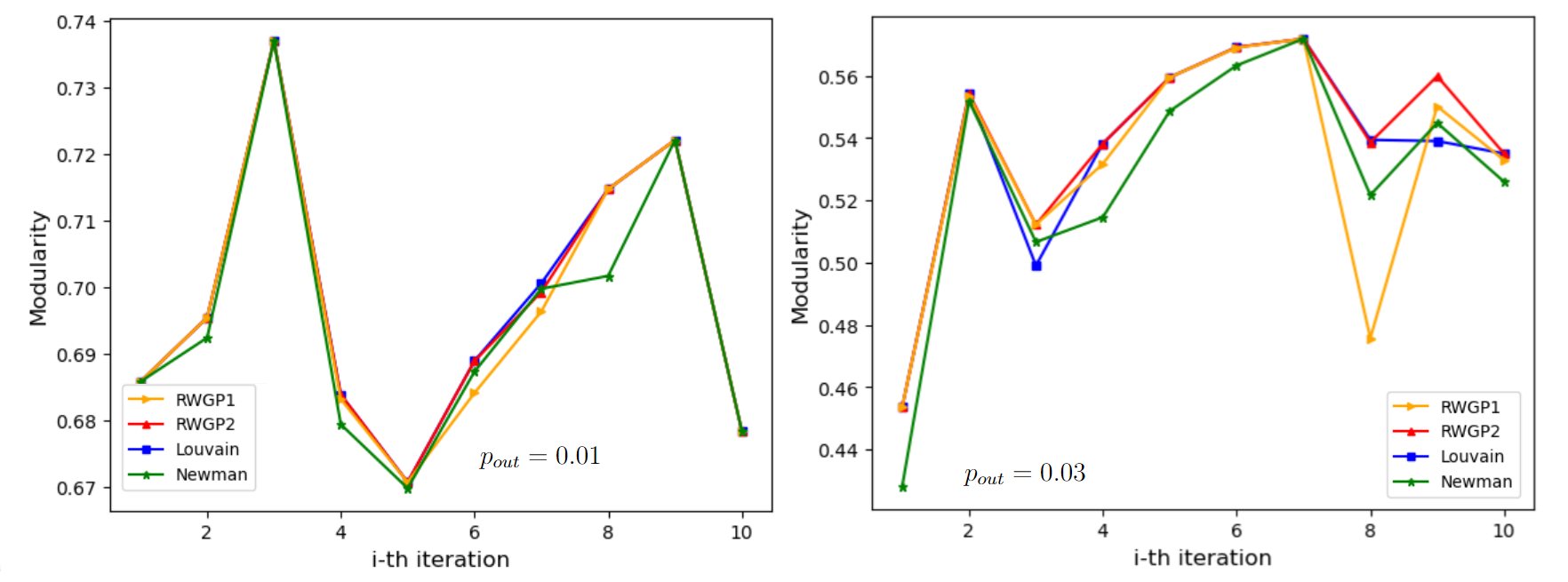}
\par\end{centering}
\caption{Modularity obtained in \textbf{Experiment 1} using Gaussian random generator model with $N \in [500;1000]$, $m \in [50;100]$.}\label{Fig1}
\end{figure}

\subsubsection*{Experiment 2: Experiment on the random graph generated by the Gaussian random generator model.}
We use the Gaussian random generator model with a number of vertices $N$, the mean of the community’s size $m$ taken with a uniform distribution in the following corresponding intervals: $N \in [1000;2000]$, $m \in [100;200]$. In implementing the Random Walk Graphs Partition Algorithm 1 and Random Walk Graphs Partition Algorithm 2, we set $t$ to $40$.
We present these results in  Figure \ref{Fig2}. 

\begin{figure}
\begin{centering}
\includegraphics[width=1\columnwidth]{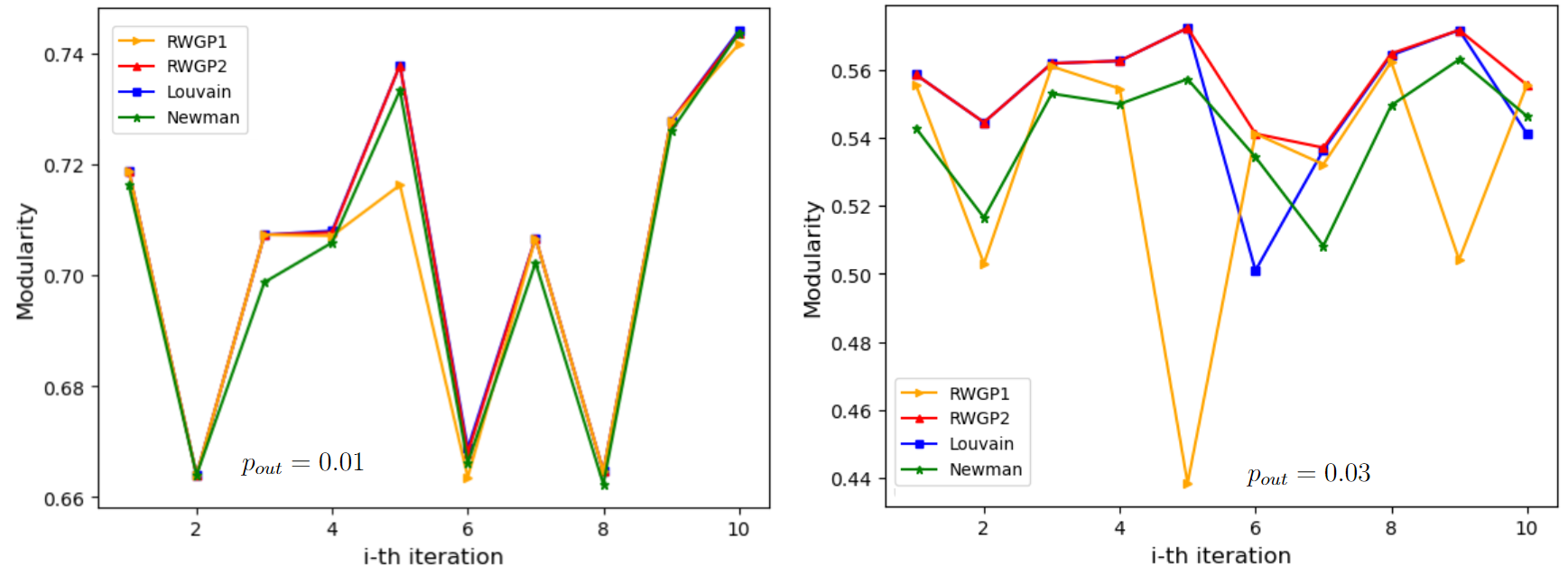}
\par\end{centering}
\caption{Modularity obtained in \textbf{Experiment 2} using Gaussian random generator model with $N \in [1000;2000]$, $m \in [100;200]$.}\label{Fig2}
\end{figure}

\subsubsection{Experiments for Random Walk Graphs Partition Algorithm on the random graph
generated by the Planted-l partition model}
With the Planted-l partition model, we also fix $p_{in}=0.7$ and explore $p_{out}$ values of $0.01$, and $0.03$, corresponding to graphs with clear community structure to graphs with unclear community structure. 
\subsubsection*{Experiment 1: Experiment on the random graph generated by the Planted-l partition
model.}
Using the Planted-l partition model with a number of communities $l$, the size of each community $g$ is taken with a uniform distribution in the following corresponding intervals: $g \in [3;5]$, $l \in [50;70]$. In implementing the Random Walk Graphs Partition Algorithm 1 and Random Walk Graphs Partition Algorithm 2, we set $t$ to $15$.
We present these results in  Figure \ref{Fig3}.

\begin{figure}
\begin{centering}
\includegraphics[width=1\columnwidth]{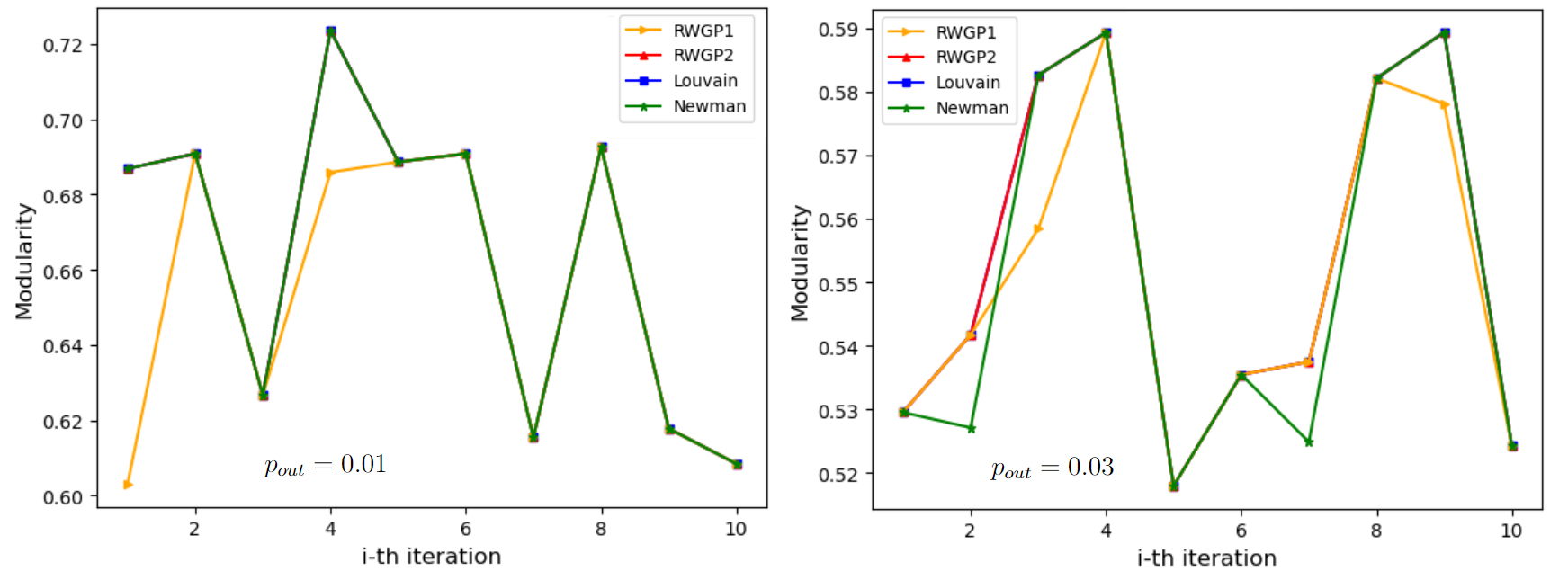}
\par\end{centering}
\caption{ Modularity obtained in \textbf{Experiment 1} using the Planted-l partition
model with $g \in [3;5]$, $l \in [50;70]$. }\label{Fig3}
\end{figure}

\subsubsection*{Experiment 2: Experiment on the random graph generated by the Planted-l partition
model.}
We use the Planted-l partition model with a number of communities $l$, the size of each community $g$ taken with a uniform distribution in the following corresponding intervals: $g \in [5;10]$, $l \in [100;200]$. In implementing the Random Walk Graphs Partition Algorithm 1 and Random Walk Graphs Partition Algorithm 2, we set $t$ to $40$.
We present these results in  Figure \ref{Fig4}.

\begin{figure}
\begin{centering}
\includegraphics[width=1\columnwidth]{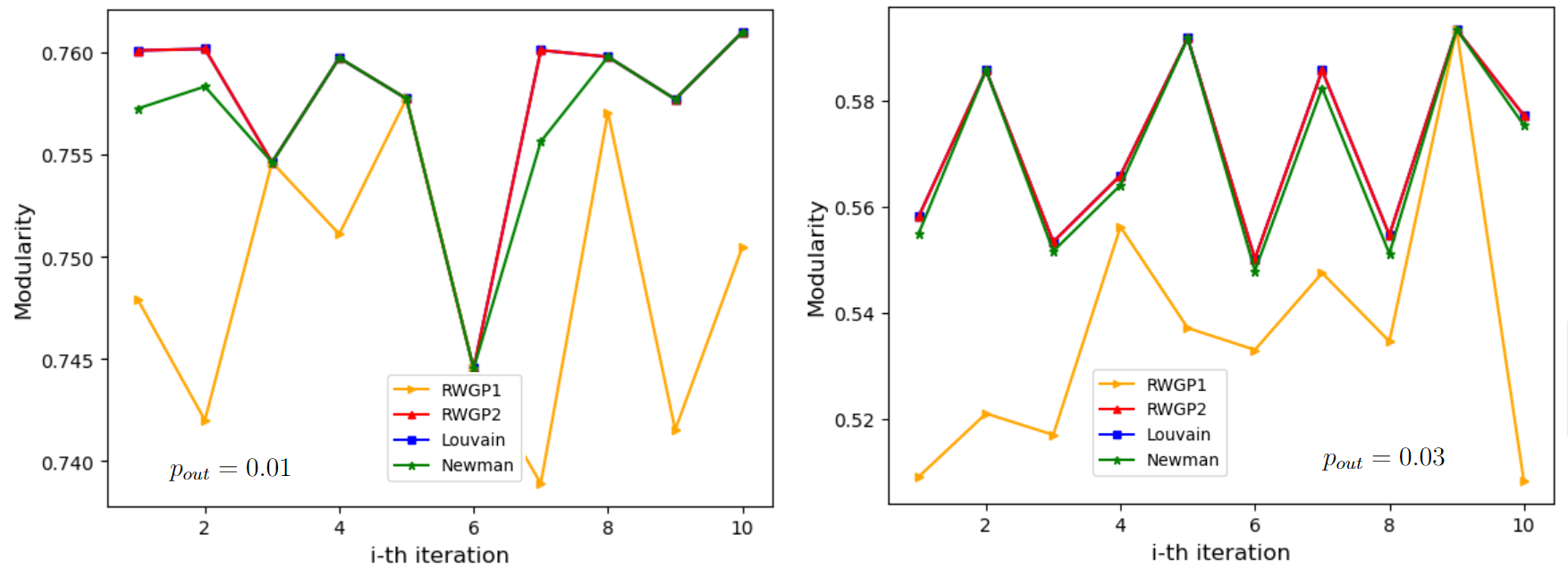}
\par\end{centering}
\caption{Modularity obtained in \textbf{Experiment 2} using the Planted-l partition
model with $g,l$ taken with $g \in [5;10]$, $l \in [100;200]$, and $p_{in}=0.7$. }\label{Fig4}
\end{figure}

\subsection{Experiments for Random Walk Graphs Partition Louvain Algorithm on graphs are randomly generate}\label{RWGPL}

In this section, we conduct experiments to compare the effectiveness of our proposed Random Walk Graphs Partition Louvain Algorithm (RWGP-Louvain) with existing algorithms, namely the original Louvain algorithm \cite{Louvain}, the Fast Louvain algorithm \cite{FastLouvain}, and the Leiden algorithm \cite{leiden}. The experiments involve randomly generated graphs using the Gaussian random generator and Planted-l partition models.

We perform ten trials on randomly generated graphs for each experiment utilizing either the Gaussian random generator or the Planted-l partition model. Subsequently, we calculate Modularity (use formula \ref{QQ}) for the clustering results obtained from different algorithms and calculate the NMI (use formula \ref{NMI}) between the clustering results obtained when applying the algorithms and the original clustering generated when creating the graph. 

Finally, we present the Modularity and NMI values corresponding to RWGP-Louvain, Louvain \cite{Louvain}, Fast Louvain \cite{FastLouvain}, and Leiden \cite{leiden} algorithms through graphical representations. We note that we apply the Random Walk Graph Partition Algorithm 2 in Phase 2 of the Random Walk Graph Partition Louvain Algorithm.

\subsubsection{Experiments for Random Walk Graphs Partition Louvain Algorithm on the random graph generated by the Gaussian random generator model}

With the Gaussian random generator model, we will experiment on graphs with the number of vertices ranging from a few hundred to tens of thousands. For each type of graph, we will fix $p_{in}=0.7$, and we will explore $p_{out}$ values of $0.01$, $0.03$, and $0.05$, corresponding to graphs with clear community structure to graphs with unclear community structure. And we will set the variance of the community’s size $\delta =2.5$.

\subsubsection*{Experiment 1: Random Graphs from the Gaussian Random Generator Model}
Using the Gaussian Random Generator Model, we set the number of vertices (\(N\)) and the mean community size (\(m\)) with a uniform distribution within \(N \in [500;1000]\) and \(m \in [20;30]\). For RWGP-Louvain Algorithm, \(t\) is set to \(15\).
The results are depicted in Figures \ref{Fig1a} and \ref{Fig1a1}. 
\begin{figure}
\begin{centering}
\includegraphics[width=1\columnwidth]{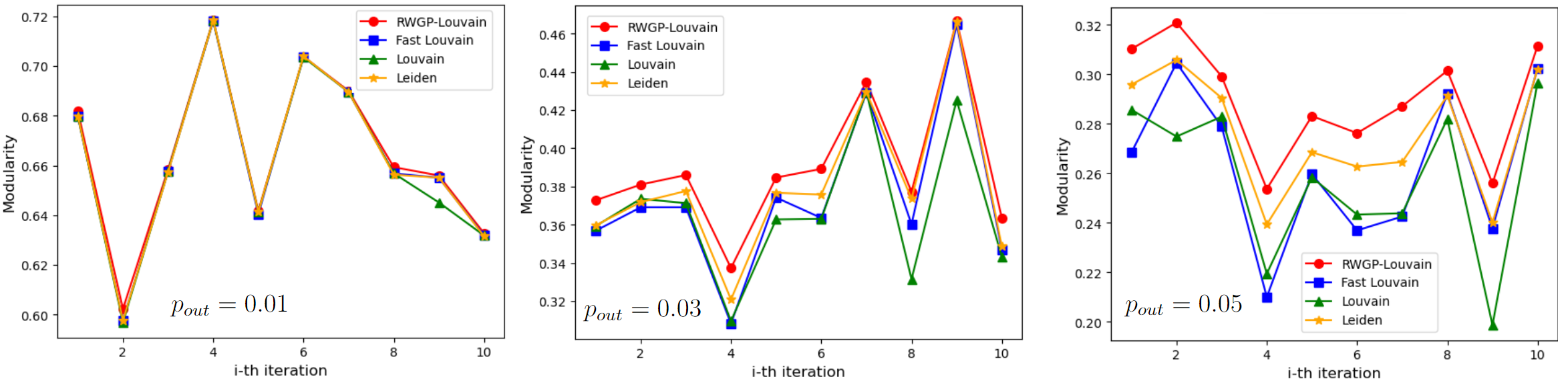}
\par\end{centering}
\caption{Modularity was observed in \textbf{Experiment 1} using the Gaussian random generator model with $N \in [500;1000]$ and $m \in [20;30]$.}\label{Fig1a}
\end{figure}

\begin{figure}
\begin{centering}
\includegraphics[width=1\columnwidth]{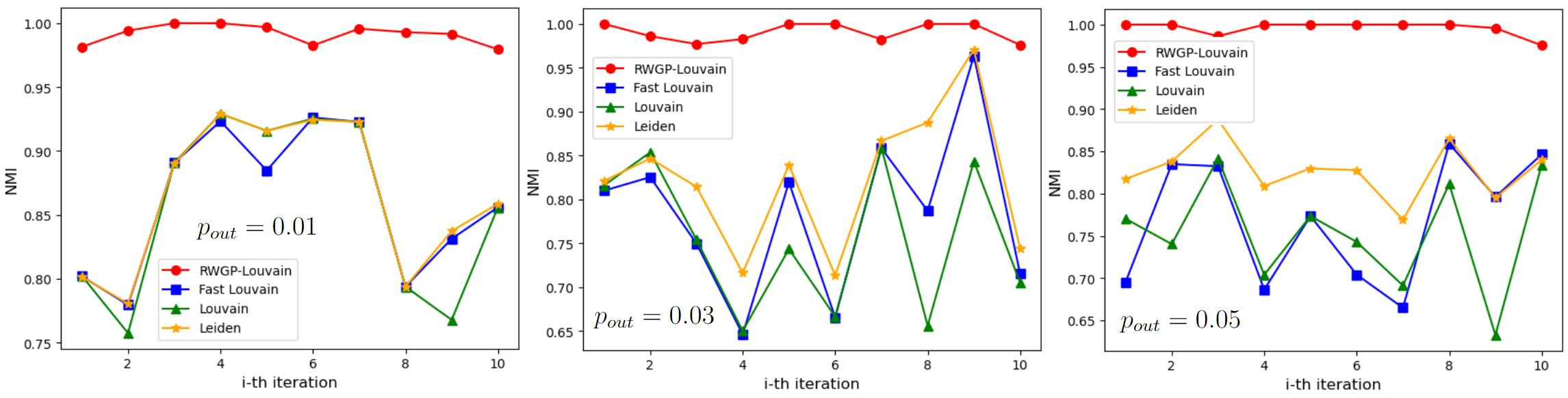}
\par\end{centering}
\caption{NMI was observed in \textbf{Experiment 1} using the Gaussian random generator model with $N \in [500;1000]$ and $m \in [20;30]$.}\label{Fig1a1}
\end{figure}

\subsubsection*{Experiment 2: Random Graphs from the Gaussian Random Generator Model}
Using the Gaussian Random Generator Model, we set the number of vertices (\(N\)) and the mean community size (\(m\)) with a uniform distribution within \(N \in [2000;4000]\) and \(m \in [50;70]\). For RWGP-Louvain Algorithm, \(t\) is set to \(25\). The results are illustrated in Figures \ref{Fig2a} and \ref{Fig2a1}.
\begin{figure}
\begin{centering}
\includegraphics[width=1\columnwidth]{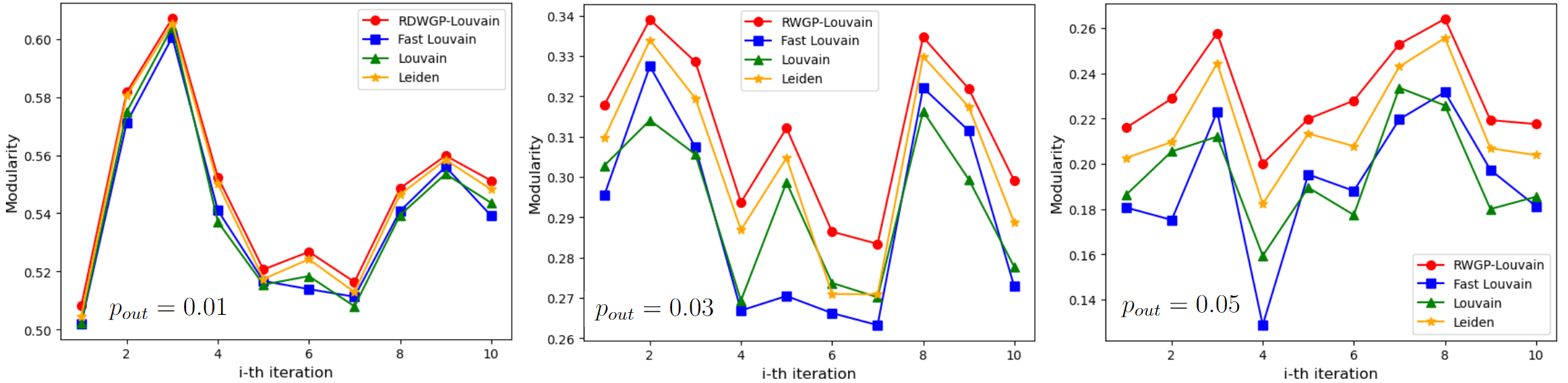}
\par\end{centering}
\caption{Modularity was observed in \textbf{Experiment 2} using the Gaussian random generator model with $N \in [2000;4000]$ and $m \in [50;70]$.}\label{Fig2a}
\end{figure}

\begin{figure}
\begin{centering}
\includegraphics[width=1\columnwidth]{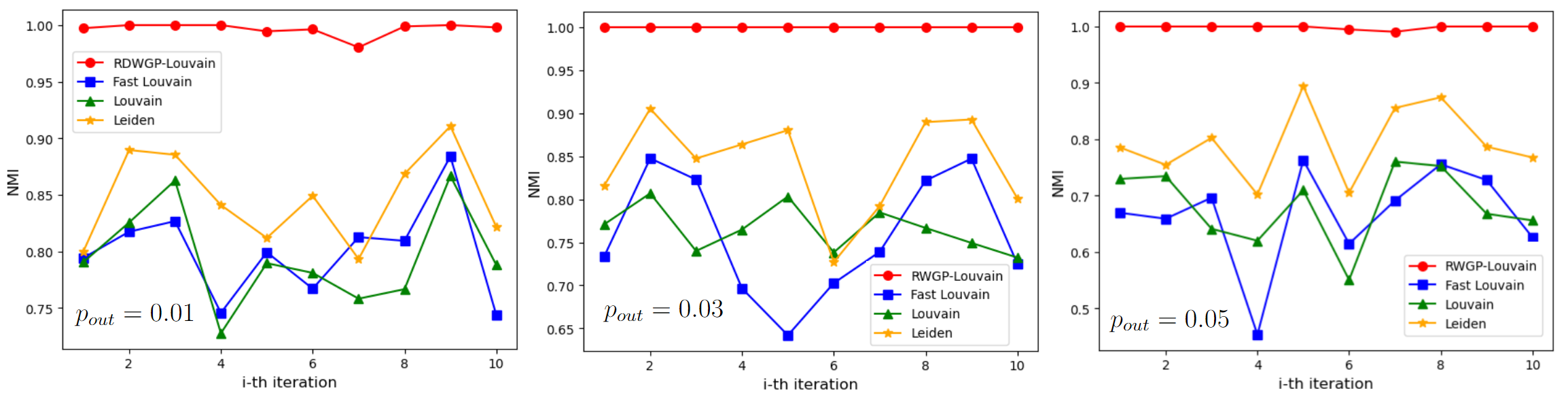}
\par\end{centering}
\caption{NMI was observed in \textbf{Experiment 2} using the Gaussian random generator model with $N \in [2000;4000]$ and $m \in [50;70]$.}\label{Fig2a1}
\end{figure}

\subsubsection*{Experiment 3: Random Graphs from the Gaussian Random Generator Model}
Using the Gaussian Random Generator Model, we set the number of vertices (\(N\)) and the mean community size (\(m\)) with a uniform distribution within \(N \in [4000;8000]\) and \(m \in [100;150]\). For RWGP-Louvain Algorithm, \(t\) is set to \(40\). The outcomes are presented in Figures \ref{Fig3a} and \ref{Fig3a1}.
\begin{figure}
\begin{centering}
\includegraphics[width=1\columnwidth]{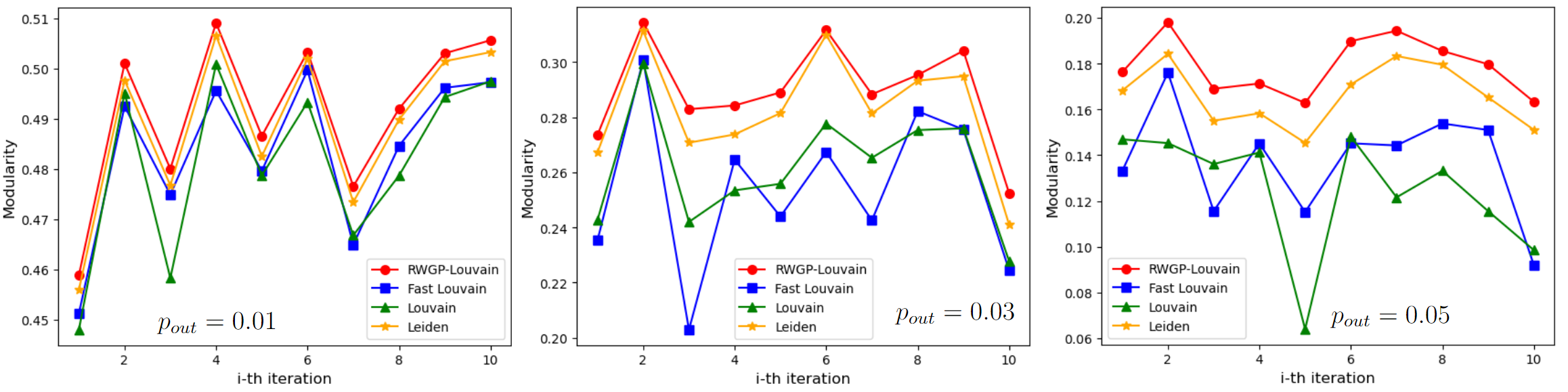}
\par\end{centering}
\caption{Modularity was observed in \textbf{Experiment 3} using the Gaussian random generator model with $N \in [4000;8000]$ and $m \in [50;70]$.}\label{Fig3a}
\end{figure}

\begin{figure}
\begin{centering}
\includegraphics[width=1\columnwidth]{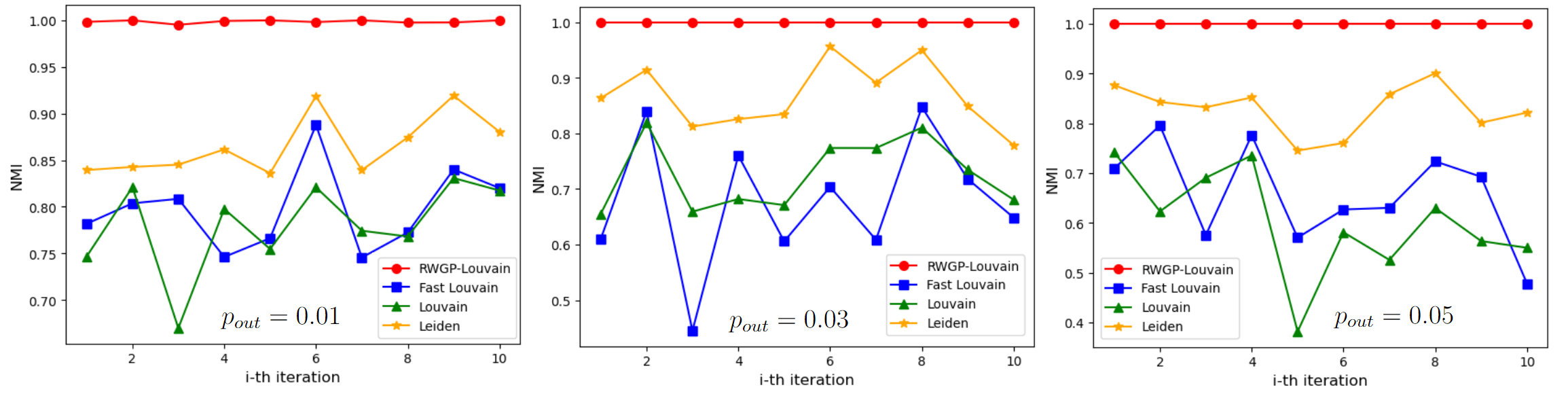}
\par\end{centering}
\caption{NMI was observed in \textbf{Experiment 3} using the Gaussian random generator model with $N \in [4000;8000]$ and $m \in [100;120]$.}\label{Fig3a1}
\end{figure}

\subsubsection{Experiments for RWGP-Louvain Algorithm on the random graph generated by the Planted-l partition
model}

With the Planted-l partition model, we also fix $p_{in}=0.7$ and explore $p_{out}$ values of $0.01$, $0.03$, and $0.05$, corresponding to graphs with clear community structure to graphs with unclear community structure.

\subsubsection*{Experiment 1: Random Graphs from the Planted-l Partition Model}
Using the Planted-l Partition Model, we set the number of communities ($l$) and the size of each community ($g$) are selected with a uniform distribution within the intervals $g \in [30;50]$ and $l \in [30;50]$. For RWGP-Louvain Algorithm, \(t\) is set to \(25\). The results are depicted in Figures \ref{Fig1b} and \ref{Fig1b1}.
\begin{figure}
\begin{centering}
\includegraphics[width=1\columnwidth]{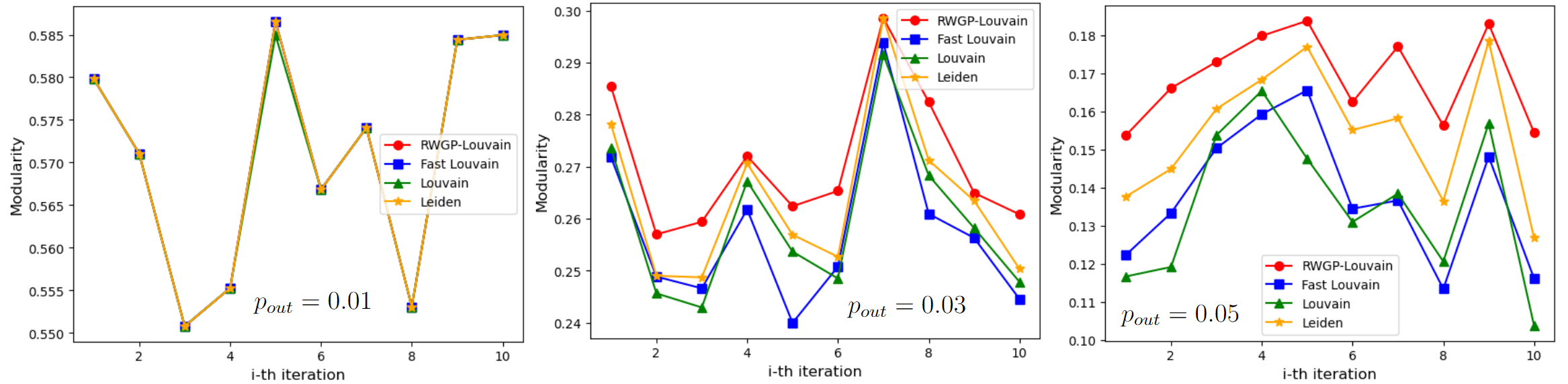}
\par\end{centering}
\caption{
Modularity was observed in \textbf{Experiment 1} using the Planted-l partition model with $g \in [30;50]$, $l \in [30;50]$.}\label{Fig1b}
\end{figure}

\begin{figure}
\begin{centering}
\includegraphics[width=1\columnwidth]{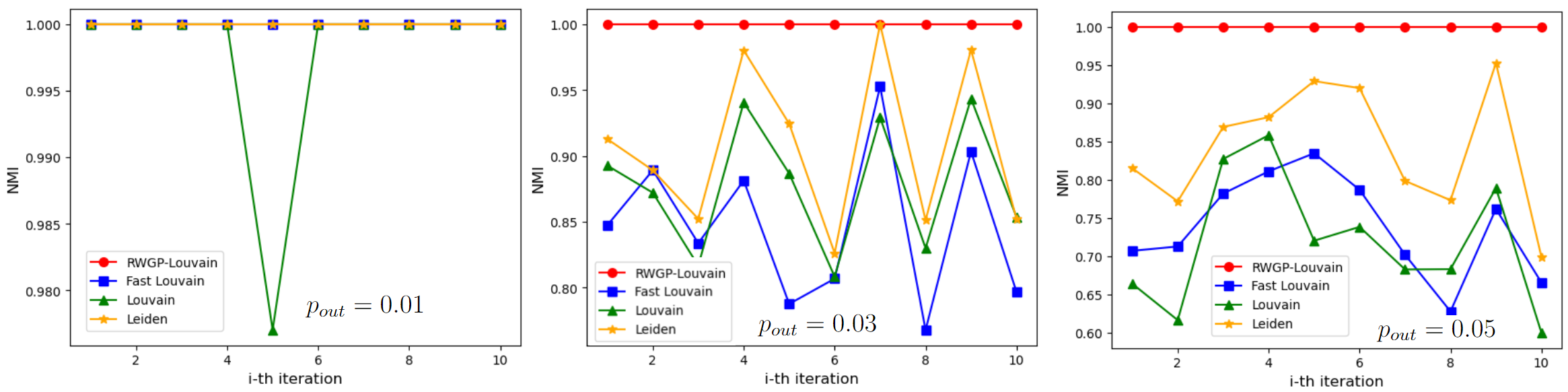}
\par\end{centering}
\caption{NMI was observed in \textbf{Experiment 1} using the Planted-l partition model with $g \in [30;50]$, $l \in [30;50]$.}\label{Fig1b1}
\end{figure}

\subsubsection*{Experiment 2:  Random Graphs from the Planted-l Partition Model}
Using the Planted-l Partition Model, we set the number of communities ($l$) and the size of each community ($g$) are selected with a uniform distribution within the intervals $g \in [50;70]$ and $l \in [50;70]$. For RWGP-Louvain Algorithm, \(t\) is set to \(25\). The results are illustrated in Figures  \ref{Fig2b} and \ref{Fig2b1}.
\begin{figure}
\begin{centering}
\includegraphics[width=1\columnwidth]{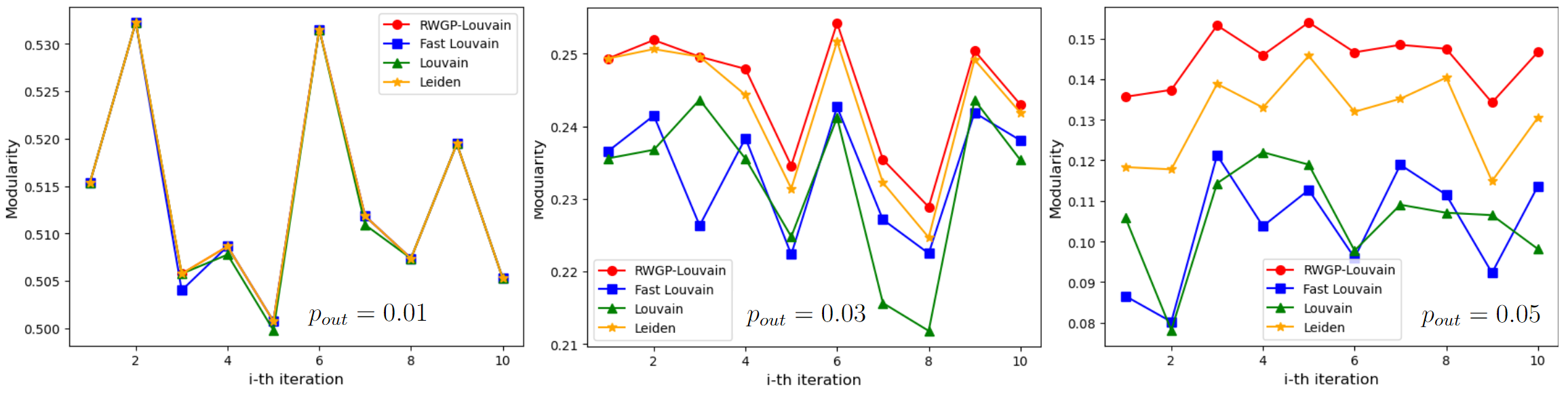}
\par\end{centering}
\caption{Modularity was observed in \textbf{Experiment 2} using the Planted-l partition model with $g \in [50;70]$, $l \in [50;70]$.}\label{Fig2b}
\end{figure}

\begin{figure}
\begin{centering}
\includegraphics[width=1\columnwidth]{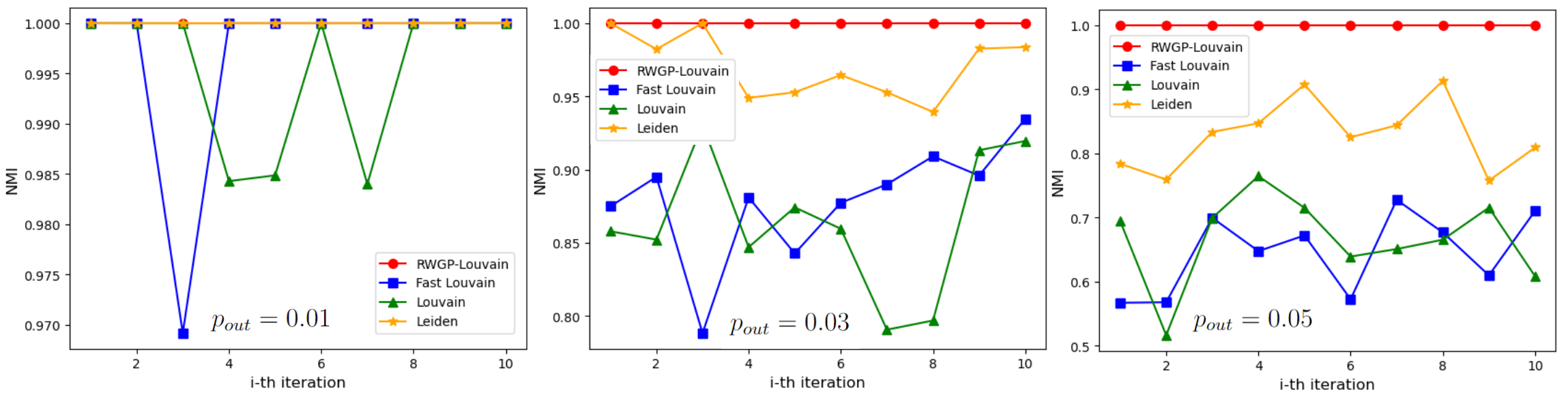}
\par\end{centering}
\caption{NMI was observed in \textbf{Experiment 2} using the Planted-l partition model with $g \in [50;70]$, $l \in [50;70]$.}\label{Fig2b1}
\end{figure}

\subsection{Experiments on real data}
Before performing the experiments, we will introduce some famous real data used in this section. \\
\textbf{Zachary's karate club: } Wayne W. Zachary examined a karate club's social network from 1970 to 1972, detailed in \cite{kara}. The network, featuring 34 members and capturing interactions beyond the club, gained prominence as a community structure example in networks following its analysis by Michelle Girvan and Mark Newman in 2002 \cite{Foot}.\\
 \textbf{College football: }
The college football network, examined in \cite{Foot}, serves as a benchmark for community detection. It illustrates the games played by Division I colleges in the autumn of 2000, with each node representing a football team and each edge representing a regular season game. With 115 nodes and 616 edges, the network can be divided into 12 communities based on athletic conferences, each comprising 8 to 12 teams.\\
\textbf{Jazz network:}
Data was sourced from The Red Hot Jazz Archive digital database \cite{Ncite7}, comprising 198 bands active between 1912 and 1940, predominantly in the 1920s. The database identifies musicians in each band, but distinguishing their temporal involvement is challenging, hindering the study of the collaboration network's temporal evolution. The remaining 1275 musicians' names are dispersed across the bands.\\
\textbf{Metabolic network:}
As in  \cite{meta}, a metabolic network encompasses the entire metabolic and physical processes governing a cell's physiological and biochemical characteristics. It includes metabolic reactions, pathways, and the regulatory interactions orchestrating these reactions.

In addition, we will also perform experiments on the following real data:  Hamster households, hamster friendships, and Asoiaf. These data we can see in \cite{hams}.

Now, we will conduct experiments to compare the effectiveness of our proposed Random Walk Graphs Partition Louvain Algorithm (RWGP-Louvain) with existing algorithms, namely the original Louvain algorithm \cite{Louvain}, the Fast Louvain algorithm \cite{FastLouvain}, and the Leiden algorithm \cite{leiden} on real data.  After obtaining the clustering results, we will calculate modularity (using the formula \ref{QQ}) and record the modularity results in the following table \ref{tab1}.

\begin{table} 
\begin{centering}
\begin{tabular}{|p{4cm}|p{2cm}|p{3cm}|p{2cm}|p{2.5cm}|}
\hline Graph $G=(|V|,|E|)$ & Louvain & RWGP Louvain & Leiden & Fast Louvain \tabularnewline
\hline 
 Karate \cite{meta}, $G = ( ,)$ & 0.2394806   & 0.2450690  &  0.2404668&  0.2404668    \tabularnewline
 \hline 
 Metabolic network \cite{meta} \\ $G = ( 453 , 2025 )$ &  0.2960217   & 0.3054383  & 0.2965548&  0.29486578    \tabularnewline
\hline 
  College football \cite{Foot}\\ $G = ( 115, 613 )$& 0.5357494 & 0.5400073  & 0.5357494 & 0.5316192    
  \tabularnewline
\hline 
  Jazz network \cite{Ncite7}\\ $G = ( 198, 2742 )$ &  0.2833862 & 0.2850413 &  0.2828381 & 0.2821492  \tabularnewline
\hline 
 Hamster households \cite{hams}\\ $ G = ( 921, 4032 )$ &  0.1678505 & 0.2058422  &  0.1983055 &  0.1940693 
 \tabularnewline
\hline 
Hamsters friendships \cite{hams}\\ $G = ( 1858, 12534 )$  & 0.3096735 & 0.3308066   & 0.3216459 &  0.3095751
\tabularnewline
\hline 
Asoiaf \cite{hams}\\ $G = ( 796 , 2823 )$ &  0.5182439  & 0.5370974 &0.5404249 & 0.5185884  \tabularnewline
\hline 
 
\end{tabular}
\par\end{centering}
\caption{In this table, we present the values of Modularity (using the formula \ref{QQ}) corresponding to the clustering results of the algorithms.}\label{tab1}

\end{table}
\subsection{Conclusion of the experiments}

The above results show that our algorithms are efficient in almost all experiments.
\begin{itemize}
\item In the Subsection \ref{RWGP}, we perform some simple experiments
using two models Gaussian random generator model and Planted-l
partition model to compare our Random Walk Graphs Partition Algorithm 1 and Random Walk Graphs Partition Algorithm 2 with Louvain algorithm \cite{Louvain}, Newman's Spectral Method \cite{main}. We see that our algorithm works well on graphs with clear community structures. On graphs with unclear structures, our Random Walk Graphs Partition Algorithm 2 algorithm still works more efficiently and has less computational complexity than Newman's Spectral Method.

\item In Subsection \ref{RWGPL}, we performed experiments comparing the effectiveness of our proposed Random Walk Graphs Partition Louvain Algorithm and the Louvain ucite{Louvain}, Fast Louvain Algorithm \cite{ FastLouvain} and Leiden algorithm \cite{leiden} through graphs on randomly generated graphs using Gaussian random generator model and Planted-l partition model. The algorithms we investigated on graphs with apparent community structures produce good results. Our Random Walk Graphs Partition Louvain Algorithm performs much better than the Louvain, Fast Louvain, and Leiden algorithms for graphs with unclear community structures.

\item Furthermore, when performing experiments using real data, our Random Walk Graphs Partition Louvain Algorithm is also more effective than the remaining algorithms on some real data.
\end{itemize}

\section{Conclusion and further work}

This paper introduced a novel approach to community detection through graph partitioning by leveraging a random walk strategy akin to the spectral method presented in \cite{main}, referred to as the Random Walk Graph Partition Algorithm. Our proposed algorithm demonstrates superior computational efficiency compared to the spectral algorithm discussed in \cite{main}.

Moreover, recognizing the Random Walk Graph Partition Algorithm as a phase within the Louvain algorithm, we introduce a new algorithm called the Random Walk Graph Partition Louvain Algorithm. This algorithm retains computational complexity equivalent to the Louvain algorithm while achieving enhanced efficiency, particularly in scenarios involving graphs with ambiguous community structures.

To validate the rationale and efficacy of our proposed algorithms, we conduct experiments on randomly generated and real-world datasets. The results underscore the effectiveness of our approaches, reinforcing their applicability in practical community detection scenarios.

\section*{Acknowledgments}
This research was supported 
by the International Centre of Research and Postgraduate Training in Mathematics, Institute of Mathematics, project code: ICRTM04-2021.01, and 
Vingroup Innovation Foundation, project code: VINIF.2023.STS.53.

\end{document}